\newtheorem{theorem}{Theorem}
\newtheorem{corollary}{Corollary}
\newtheorem{definition}{Definition}
\newtheorem{remark}{Remark}
\def\Real{\mathbb{R}}
\DeclareMathOperator{\relu}{ReLU}
\DeclareMathOperator{\diag}{diag}
\pgfplotsset{compat=1.17}
\newcounter{cnt:snote}
\newcommand\snote{\stepcounter{cnt:snote}\@ifstar{\@snotemar}{\@snotepar}}
\newcommand{\@snotepar}[3]{\noindent \marginpar{\textbf{\color{#1}\arabic{cnt:snote}:} \@snotebody{#3}}
  {\color{#1}\bf (}#2{\color{#1} \bf )\textsuperscript{\arabic{cnt:snote}}}}
\newcommand{\@snotemar}[2]{\marginpar{\footnotesize\textbf{\color{#1}\arabic{cnt:snote}}
                \@snotebody{#2}}}
\newcommand{\@snotebody}[1]{\textsf{#1}}
\def\T{\mathsf{T}}
\title{Neural Abstractions}
 \author{%
   Alessandro Abate\thanks{The authors are listed alphabetically} \\
   Department of Computer Science\\
   University of Oxford, UK\\
   \\\\
   % examples of more authors
   \And
   Alec Edwards\footnotemark[1] \\
   Department of Computer Science \\
   University of Oxford, UK \\
   % Address \\
   % \texttt{email} \\
   \AND
   Mirco Giacobbe\footnotemark[1] \\
   School of Computer Science \\
   University of Birmingham, UK \\
  % \And
  % Coauthor \\
  % Affiliation \\
  % Address \\
  % \texttt{email} \\
 }
\begin{document}

\maketitle
\begin{abstract}
We present a novel method for the safety verification of nonlinear
dynamical models that uses neural networks to represent abstractions of their dynamics.
Neural networks have extensively been used before as approximators; 
in this work, we make a step further and use them for the first time as abstractions. 
For a given dynamical model, our method synthesises a neural network 
that overapproximates its dynamics 
by ensuring an arbitrarily tight, formally certified bound 
on the approximation error. For this purpose, we employ a counterexample-guided inductive synthesis procedure. 
We show that this produces a neural ODE with non-deterministic disturbances 
that constitutes a formal abstraction of the concrete model under analysis. 
This guarantees a fundamental property: 
if the abstract model is safe, i.e., free from any initialised trajectory that reaches an undesirable state, 
then the concrete model is also safe.
By using neural ODEs with ReLU activation functions as abstractions, 
we cast the safety verification problem for nonlinear dynamical models into that 
of hybrid automata with affine dynamics, which we verify using SpaceEx.
We demonstrate that our approach performs comparably to the mature tool Flow* 
on existing benchmark nonlinear models. We additionally demonstrate and that it is effective on 
models that do not exhibit local Lipschitz continuity, which are out of reach to the existing technologies. 
\end{abstract}
\section{Introduction}
\label{sec:intro}

Dynamical models describe 
processes that are ubiquitous in science and engineering.  
They are widely used to model the behaviour of cyber-physical system designs, whose correctness is 
crucial when they are deployed in safety-critical domains~\cite{DBLP:journals/pieee/DerlerLV12,alur2015principles,lncs/AlurGHLM19}. 
To guarantee that a dynamical model satisfies a safety specification, simulations are useful but insufficient 
because they are inherently non-exhaustive and they suffer from numerical errors, which may leave unsafe behaviours unidentified.
Formal verification of continuous dynamical models tackles the question of determining with formal certainty whether every possible behavior of the model
satisfies a safety specification~\cite{daglib/0032856,wi/13/DangFGG13,mc/0001FPP18}. 
In this paper, we present a method to combine machine learning and symbolic reasoning for a sound and effective safety verification of nonlinear dynamical models. 

The formal verification problem for continuous-time and hybrid dynamical models is unsolvable in general and, even for models with linear dynamics, complete procedures are available under stringent conditions~\cite{henzinger1998WhatDecidableHybrid,DBLP:conf/hybrid/LafferrierePY99,DBLP:journals/mcss/LafferrierePS00,DBLP:journals/tcs/AlurD94,DBLP:journals/tcs/AlurCHHHNOSY95}. For most practical models that contain nonlinear terms~\cite{khalil2002NonlinearSystems,sastry1999NonlinearSystems}, 
methods for formal verification with soundness guarantees 
involve laborious safety and reachability procedures 
whose efficacy can only be demonstrated in practice. 
Formal verification of nonlinear models
require ingenuity, and has involved sophisticated analysis 
techniques such as mathematical relaxations~\cite{rtss/ChenAS12,cav/ChenAS13,tecs/ChenMS17,atva/SassiTDG12,rc/DangT12,hybrid/Sankaranarayanan11,cav/SankaranarayananT11},
abstract interpretation~\cite{cav/FanQM0D16,formats/KimBD21,fmsd/DreossiDP17,cav/MoverCGIT21,hybrid/Dreossi17},
constraint solving~\cite{tacas/KongGCC15,fmcad/CimattiMT12,fmcad/BaeG17},
and discrete abstractions~\cite{871304,hybrid/Althoff13,rtss/0002S16,formats/BogomolovGHK17}. 
Notwithstanding recent progress, both scalability and expressivity remain open challenges for nonlinear models: the largest model used in the annual competition has 7 variables~\cite{DBLP:conf/arch/GerettiSABCCDFK21}.
In addition, existing formal approaches rely on symbolic reasoning techniques that explicitly leverage the structure of the dynamics. 
This results in verification procedures that are bespoke to 
restricted classes of models. For example, it is common for formal verification procedures 
to require the input model to be Lipschitz continuous. Yet, dynamical models with vector fields 
that violate this assumption are abundant in literature, and a wide variety 
of models of natural phenomena are non-Lipschitz, from fluid dynamics to n-body orbits and chaotic systems, as well as in  engineering, from electrical circuits and hydrological systems~\cite{eyink2015spontaneous,diacu1995solution}. 
Our approach makes progress in expressivity, showing that using {\em neural networks as abstractions} of dynamical systems enables an effective formal verification of nonlinear dynamical models, including models that do not exhibit local Lipschitz continuity. 

Abstraction is a standard process in formal verification 
that aims at translating the model under analysis---the concrete model---into a model that is simpler to analyse---the abstract model---such that verification results from the abstract model carry over to the concrete model~\cite{DBLP:conf/popl/CousotC77,DBLP:books/daglib/0020348,DBLP:books/daglib/0007403-2}. 
In verification of systems with continuous time and space, an abstraction usually consists of a partitioning of the state space of the concrete model into a finite set of regions that define the states of an abstract, finite-state machine with a corresponding behaviour.
Our method follows an approach that constructs abstract, finite-state machines whose states are augmented with continuous linear dynamics and non-deterministic drifts.
Finite-state machines with continuous, possibly non-deterministic dynamics are known as hybrid automata~\cite{DBLP:conf/lics/Henzinger96}, 
and the process of abstracting  dynamical nonlinear models 
into hybrid automata is called {\em hybridisation}; this process has 
been widely applied in formal verification~\cite{acta/AsarinDG07,hybrid/HenzingerW95,hybrid/Frehse05,hybrid/AsarinDG03,hybrid/BakBHJP16,cav/MajumdarZ12,cav/PrabhakarS13,tac/PrabhakarD015,rtss/SotoP20,hybrid/DangMT10,cdc/AlthoffSB08,formats/LiBB20,tacas/RoohiP016}.

Hybridising involves partitioning the state space and computing a local 
{\em overapproximation} of the concrete model within each region of the partition. 
Common approaches for hybridisation 
partition the state space by tuning the granularity of 
rectangular or simplicial meshes, until a desired approximation error is attained. 
This may yield abstract hybrid automata 
that are too large in the number of discrete states to be effectively verified. Notably, modern tools for the 
verification of hybrid automata are designed for models 
that rarely have over hundred discrete states~\cite{DBLP:conf/arch/AlthoffAFFF0SW21},
while arbitrary meshes grow exponentially as the granularity increases.  
Explosion in discrete states has been 
mitigated using deductive approaches that  construct an appropriate 
partitioning from the expressions that define the concrete model 
and, unlike our method, rely on syntactic restrictions~\cite{hybrid/AsarinD04,hybrid/DangT11,hsb/KongBBGHJS16,formats/BogomolovGHK17,cdc/KekatosFF17,pola2008ApproximatelyBisimilarSymbolic,DBLP:conf/hybrid/KloetzerB06a,DBLP:conf/cdc/HabetsKB06,DBLP:journals/tac/BattBW08}. 

\begin{figure}
    \centering
    \begin{adjustbox}{center,clip=true, trim=0mm 0mm 0mm 0mm, scale=0.95}
        \input{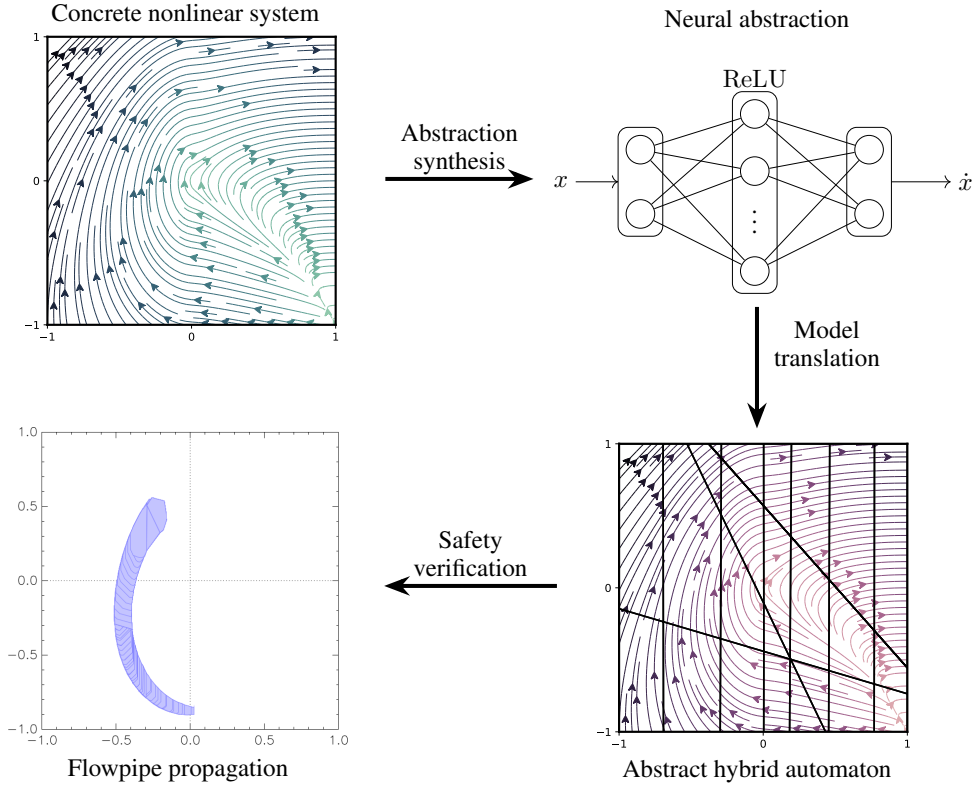}
    \end{adjustbox}
    \caption{Overview of our workflow on a non-Lipschitz dynamical model (cf. Section~\ref{sec:results}, NL2). 
    The concrete dynamics are abstracted by a neural ODE with ReLU activation functions and a certified upper-bound on the approximation error. This characterises a polyhedral partitioning and defines a hybrid automaton with affine dynamics and additive non-deterministic drift. Flowpipe propagation is finally performed through a region of non-Lipschitz continuity.}
    \label{fig:overview}
\end{figure}
We propose an {\em inductive approach to abstraction} that combines the tasks of 
partitioning the state space and overapproximating 
the dynamics into the single task of training a neural network.
We leverage the approximation capability of neural networks with ReLU activation functions to partition the state space into arbitrary polyhedral regions,
where each region and local affine approximation correspond 
to a combinatorial configuration of the neurons. We show that this ultimately enables verifying nonlinear dynamical models using efficient safety 
verifiers for hybrid automata with affine dynamics (cf. Figure~\ref{fig:overview}).

Our abstraction procedure synthesises abstract models by alternating a {\em learner}, 
which proposes candidate abstractions, 
and a {\em certifier}, which formally assures (or disproves) their validity, in a counterexample-guided inductive synthesis (CEGIS) loop~\cite{solar-lezama2008ProgramSynthesisSketching,DBLP:conf/asplos/Solar-LezamaTBSS06}.
First, the learner uses gradient descent to train 
a neural network that approximates the concrete model over a finite set of sample observations of its dynamics;
then, the certifier uses satisfiability modulo theories (SMT) to check the validity 
of an upper-bound on the approximation error over the entire continuous  domain of interest. 
If the latter disproves the bound, then it produces a counterexample which 
its added to the set of samples and the loop is repeated. 
If it certifies the bound, then the procedure returns a 
neural network approximation and a sound upper-bound on the error. 
Altogether, neural network and error bound define a neural ODE with bounded 
additive non-determinism that overapproximates the concrete model, 
which we call a \emph{neural abstraction}.

We demonstrate the efficacy of our method over multiple dynamical models from a standard benchmark set for 
the verification of nonlinear systems~\cite{DBLP:conf/arch/GerettiSABCCDFK21}, 
as well as additional locally non-Lipschitz models, 
and compare our approach with Flow*, the state-of-the-art verification tool 
for nonlinear models~\cite{cav/ChenAS13, rtss/ChenAS12, rtss/0002S16}.
We instantiate our approach on top of SpaceEx~\cite{cav/FrehseGDCRLRGDM11}, 
which is a state-of-the-art tool specialised to linear hybrid models~\cite{leguernic2009ReachabilityAnalysisHybrid,DBLP:conf/hybrid/FrehseKG13,emsoft/Frehse15}.
We evaluate both approaches in safety verification using {\em flowpipe propagation}, 
which computes the set of reachable states from a given set of initial states up to a given time horizon. 
Our experiments demonstrate that our approach performs comparably with Flow* for Lipschitz continuous model, and succeeds with non-Lipschitz models that are out of range for Flow* and violate the working assumptions of many verification tools. These outcomes suggest that neural abstractions are a promising technology, also in view of recent results on direct methods for the safety verification for neural ODEs~\cite{aaai/GruenbacherHLCS21,gruenbacher2022,DBLP:conf/formats/LopezMHJ22}.

We summarise our contributions in the following points:
\begin{itemize}
    \item we introduce the novel idea of leveraging neural networks to represent abstractions in formal verification, and we instantiate it in safety verification of nonlinear dynamical models;
    \item we present a CEGIS procedure for the synthesis of neural ODEs that formally 
    overapproximate the dynamics of nonlinear models, which we call neural abstractions; 
    \item we define a translation from neural abstractions defined using ReLU activation functions to hybrid automata with affine dynamics and additive non-determinism;
    \item we implement our approach\footnote{The code is available at \href{https://github.com/aleccedwards/neural-abstractions-nips22}{https://github.com/aleccedwards/neural-abstractions-nips22}.} and demonstrate its comparable performance w.r.t. the state-of-the-art tool Flow* in safety verification of Lipschitz-continuous models, and even superior efficacy on models that do not exhibit local Lipschitz continuity. 
\end{itemize}
We consider there to be no significant negative societal impact of our work.

\section{Neural Abstractions of Dynamical Models}\label{sec:na}

We study the formal verification question of whether an $n$-dimensional, continuous-time, autonomous dynamical model with possibly uncertain (bounded) disturbances, considered within a region of interest, is safe with respect to a region of bad states when initialised from a region of initial states.  
\begin{definition}[Dynamical Model]
A dynamical model $\mathcal{F}$ defined over a region of 
interest $\mathcal{X} \subseteq \Real^n$ consists of a 
nonlinear function $f \colon \Real^n \to \Real^n$
and a possibly null disturbance radius $\delta \geq 0$. 
Its dynamics are given by the 
system of nonlinear ODEs
\begin{equation}\label{eq:dynamical-system}
    \dot{x} = f(x) + d, 
    \quad \|d\| \leq \delta,
    \quad x \in \mathcal{X}, 
\end{equation}
where $\|\cdot\|$ denotes a norm operator 
(unless explicitly stated, we assume the norm operator
to be given the same semantics across the paper).
A trajectory of $\mathcal{F}$ defined over time horizon $T > 0$ is a function $\xi \colon [0,T] \to \Real^n$
that admits derivative at each point in $[0,T]$ such that, 
for all $t \in [0,T]$, it holds true that 
$\xi(t) \in \mathcal{X}$ and
$\dot{\xi}(t) = f(\xi(t)) + d_t$ for some $\|d_t\|<\delta$. 
Notably, symbol $d$ in Equation~\eqref{eq:dynamical-system} is interpreted
as a non-deterministic disturbance that at any time can take any possible value within
the bound provided by $\delta$.
\end{definition}

Let the sets $\mathcal{X}_0 \subset \mathcal{X}$ 
be a region of initial states 
and $\mathcal{X}_B \subset \mathcal{X}$ be a region of bad states.
We say that a trajectory $\xi$ defined over time horizon $T$ is initialised if
$\xi(0) \in \mathcal{X}_0$; additionally, we say that it is safe if
$\xi(t) \not\in \mathcal{X}_B$ for all $t \in [0,T]$; 
dually, we say that it is unsafe if 
$\xi(t) \in \mathcal{X}_B$ for some $t \in [0,T]$. 
The safety verification question for consists 
of determining whether all initialised trajectories are safe. 
If this is the case, then we say that the model is safe with respect to $\mathcal{X}_0$ and $\mathcal{X}_B$. 
If there exist at least one initialised trajectory that is unsafe, then we say that the model is unsafe.

We tackle safety verification by abstraction, that is, we construct an abstract dynamical model that captures all behaviours of the concrete 
nonlinear model. This implies that if the abstract model is safe then the 
concrete model is necessarily safe too, and we can thus
apply a verification procedure over the abstraction
to determine whether the concrete model is safe.
Notably, the converse may not hold: lack of safety of the abstract model 
does not carry over to the concrete model, 
because our abstraction is an overapproximation. 
We ultimately obtain a sound (but not complete) safety verification 
procedure. Our approach synthesises an
abstract dynamical model defined in terms a feed-forward neural network 
with ReLU activation functions and endowed with a bounded non-deterministic disturbance. 
This can be seen as 
a neural ODEs~\cite{nips/ChenRBD18} 
augmented with an additive non-deterministic drift that ensures  the 
abstract model to overapproximate the concrete model.
To the best of our knowledge, this is the first work to consider 
neural ODEs with non-deterministic semantics. 

Our feed-forward neural network consists of an $n$-dimensional input layer $y_0$, 
$k$ hidden layers $y_1, \dots, y_k$ with dimensions $h_1, \dots, h_k$ respectively, and an $n$-dimensional output layer $y_{k+1}$.
Each hidden or output layer with index $i$ are respectively associated matrices of weights $W_i \in \Real^{h_i \times h_{i-1}}$ and a vectors of biases $b_i \in \Real^{h_i}$. 
Upon a valuation of the input layer, the value of every subsequent hidden layer is given by the following equation:
\begin{equation}
    y_i = \relu(W_i y_{i-1} + b_i).
\end{equation}
Whereas many activation functions exist, we focus 
our study on ReLU activation functions, 
applying function $\max\{x,0\}$ to every element $x \in \Real$ of its $h_i$-dimensional argument. 
Finally, the value of 
the output layer is given by the affine map $y_{k+1} = W_{k+1}x_k + b_{k+1}$.
Altogether, the network results in a function $\mathcal{N}$
whose output is $\mathcal{N}(x) = y_{k+1}$ for every given input $y_0 = x$. 
\begin{definition}[Neural Abstraction]
Let $\mathcal{F}$ be a dynamical model given by function 
$f \colon \Real^n \to \Real^n$ and 
disturbance radius $\delta \geq 0$ and let $\mathcal{X} \subseteq \mathbb{R}^n$ 
be a region of interest.
A feed-forward neural network $\mathcal{N} \colon \Real^n \to \Real^n$ defines a neural abstraction of $\mathcal{F}$  
with error bound $\epsilon > 0$ over $\mathcal{X}$, if it holds true that
\begin{equation}
    \label{eq:na-spec}
    \forall x \in \mathcal{X} 
    \colon 
    \|f(x) - \mathcal{N}(x)\| \leq \epsilon - \delta.
\end{equation}  
Then, the neural abstraction consists of the dynamical model $\mathcal{A}$ 
defined by $\mathcal{N}$ and disturbance $\epsilon$, whose dynamics are given by the following
neural ODE with bounded additive disturbances:
\begin{equation}
    \dot{x} = \mathcal{N}(x) + d, 
    \quad \|d\|\leq \epsilon,
    \quad x \in \mathcal{X}.\label{eq:na}
\end{equation}
\end{definition}

\begin{theorem}[Soundness of Neural Abstractions]\label{thm:na}
    If $\mathcal{A}$ is a neural abstraction 
    of a dynamical system $\mathcal{F}$ over a region of interest $\mathcal{X} \subseteq \Real^n$, 
    then every trajectory of $\mathcal{F}$ 
    is also a trajectory of $\mathcal{A}$.
\end{theorem}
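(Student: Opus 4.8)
The plan is to take an arbitrary trajectory $\xi \colon [0,T] \to \Real^n$ of $\mathcal{F}$ and verify directly that it satisfies every clause in the definition of a trajectory of the abstract model $\mathcal{A}$. That definition, applied to $\mathcal{A}$ (which has neural dynamics $\mathcal{N}$ and disturbance radius $\epsilon$), demands three things: that $\xi$ be differentiable on $[0,T]$, that $\xi(t) \in \mathcal{X}$ for all $t$, and that at each $t$ there exist an admissible disturbance $d_t'$ with $\|d_t'\| \leq \epsilon$ satisfying $\dot\xi(t) = \mathcal{N}(\xi(t)) + d_t'$. The first two clauses come for free, since differentiability and the invariance $\xi(t) \in \mathcal{X}$ are already built into $\xi$ being a trajectory of $\mathcal{F}$. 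All the content therefore lies in the third clause, namely exhibiting a witness disturbance for the abstract dynamics.

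First I would fix a time $t \in [0,T]$. Because $\xi$ is a trajectory of $\mathcal{F}$, there is a disturbance $d_t$ with $\|d_t\| < \delta$ and $\dot\xi(t) = f(\xi(t)) + d_t$. The key manipulation is to add and subtract $\mathcal{N}(\xi(t))$, writing
\begin{equation}
    \dot\xi(t) = \mathcal{N}(\xi(t)) + \bigl(f(\xi(t)) - \mathcal{N}(\xi(t))\bigr) + d_t,
\end{equation}
and then setting $d_t' := \bigl(f(\xi(t)) - \mathcal{N}(\xi(t))\bigr) + d_t$, so that $\dot\xi(t) = \mathcal{N}(\xi(t)) + d_t'$ holds by construction. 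It then remains only to check that $d_t'$ is admissible for $\mathcal{A}$, i.e. that $\|d_t'\| \leq \epsilon$.

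That bound will follow from the triangle inequality together with the defining property of the neural abstraction. Since $\xi(t) \in \mathcal{X}$, the abstraction inequality~\eqref{eq:na-spec} may be invoked at the point $\xi(t)$, giving $\|f(\xi(t)) - \mathcal{N}(\xi(t))\| \leq \epsilon - \delta$; combining this with $\|d_t\| < \delta$ yields
\begin{equation}
    \|d_t'\| \leq \|f(\xi(t)) - \mathcal{N}(\xi(t))\| + \|d_t\| < (\epsilon - \delta) + \delta = \epsilon.
\end{equation}
Because $t$ was arbitrary, this holds at every point of $[0,T]$, so $\xi$ meets all three clauses and is a trajectory of $\mathcal{A}$.

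I do not expect a substantive obstacle here, as the argument is a pointwise triangle-inequality estimate rather than anything requiring the theory of differential equations or existence/uniqueness results. The only points that need care are, first, that the error bound~\eqref{eq:na-spec} is a statement quantified over $\mathcal{X}$ and may be applied only because the trajectory remains inside $\mathcal{X}$ for all $t$---this is precisely what makes the region-of-interest hypothesis essential---and, second, that the disturbance budget in the specification is deliberately $\epsilon - \delta$ rather than $\epsilon$, exactly so that the concrete disturbance of radius $\delta$ can be absorbed into the abstract disturbance of radius $\epsilon$ without exceeding the bound.
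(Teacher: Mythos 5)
Your proposal is correct and follows essentially the same route as the paper's proof: fix $t$, invoke condition~\eqref{eq:na-spec} at $\xi(t)\in\mathcal{X}$, define the witness disturbance $d_t' = f(\xi(t)) - \mathcal{N}(\xi(t)) + d_t = \dot\xi(t) - \mathcal{N}(\xi(t))$, and bound it by $\epsilon$ via the triangle inequality. The only cosmetic difference is the order in which the triangle inequality and the substitution of $\dot\xi(t)$ are applied, which does not change the argument.
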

\begin{proof}[Proof of Theorem~\ref{thm:na}]
Let $\xi$ be a trajectory of $\mathcal{F}$ and $T$ be the time horizon over which $\xi$ is defined. Then, let $t \in [0,T]$. By definition of trajectory we have that (i)~$\xi(t) \in \mathcal{X}$ and there exists $d_t$ s.t. (ii)~$\|d_t\| \leq \delta$ and (iii)~$\dot{\xi}(t) = f(\xi(t)) + d_t$. By (i) and condition~\eqref{eq:na-spec} 
we have that 
$\|f(\xi(t)) - \mathcal{N}(\xi(t))\| + \delta \leq \epsilon$. Then, by (ii) we have that 
$\|f(\xi(t)) - \mathcal{N}(\xi(t))\| + \|d_t\| \leq \epsilon$ which, by triangle inequality, implies that 
$\|f(\xi(t)) + d_t - \mathcal{N}(\xi(t))\| \leq \epsilon$.
Using (iii), we rewrite it into 
$\|\dot{\xi}(t) - \mathcal{N}(\xi(t))\| \leq \epsilon$.  Finally, we define $d'_t = \dot{\xi}(t) - \mathcal{N}(\xi(t))$.
As a result, we have that $\|d'_t\| \leq \epsilon$ and 
$\dot{\xi}(t) = \mathcal{N}(\xi(t)) + d'_t$ which, together with (i), shows that $\xi$ is a trajectory of $\mathcal{A}$.
\end{proof}
\begin{corollary}\label{cor:safe}
    Let $\mathcal{X}_0 \subset \mathcal{X}$ be a region of initial states and $\mathcal{X}_B \subset \mathcal{X}$
    and region of bad states. It holds true that  
    if $\mathcal{A}$ is safe with respect to $\mathcal{X}_0$ and $\mathcal{X}_B$ then also $\mathcal{F}$ is safe with respect to $\mathcal{X}_0$ and $\mathcal{X}_B$.
\end{corollary}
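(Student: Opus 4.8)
The plan is to derive the corollary directly from the trajectory inclusion already established in Theorem~\ref{thm:na}, exploiting the observation that the predicates ``initialised'' and ``safe'' are properties of the trajectory function $\xi$ alone, relative to the fixed sets $\mathcal{X}_0$ and $\mathcal{X}_B$, and are therefore insensitive to which of the two models we regard $\xi$ as belonging to. Once this is noted, the set-level inclusion of behaviours provided by the theorem immediately upgrades to the desired implication between the safety of the two models.

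First I would fix an arbitrary initialised trajectory $\xi$ of $\mathcal{F}$ over some horizon $T$, so that $\xi(0) \in \mathcal{X}_0$ by definition. Invoking Theorem~\ref{thm:na}, $\xi$ is also a trajectory of $\mathcal{A}$; and since the condition $\xi(0) \in \mathcal{X}_0$ depends only on $\xi$, the trajectory $\xi$ remains initialised when viewed as a trajectory of $\mathcal{A}$. Next, using the hypothesis that $\mathcal{A}$ is safe with respect to $\mathcal{X}_0$ and $\mathcal{X}_B$, every initialised trajectory of $\mathcal{A}$ is safe; in particular $\xi$ satisfies $\xi(t) \notin \mathcal{X}_B$ for all $t \in [0,T]$. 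Because $\xi$ was an arbitrary initialised trajectory of $\mathcal{F}$, this shows that all initialised trajectories of $\mathcal{F}$ are safe, which is precisely the statement that $\mathcal{F}$ is safe with respect to $\mathcal{X}_0$ and $\mathcal{X}_B$. Equivalently, one may phrase the whole argument contrapositively: an unsafe initialised trajectory of $\mathcal{F}$ would, by Theorem~\ref{thm:na}, be an unsafe initialised trajectory of $\mathcal{A}$, contradicting the safety of $\mathcal{A}$.

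I expect no genuine obstacle here, since essentially all the content is carried by Theorem~\ref{thm:na}; the only point demanding a moment of care is the model-independence of the two predicates. Concretely, I would make explicit that ``$\xi$ is initialised'' and ``$\xi$ is safe'' are defined purely in terms of $\xi$, $\mathcal{X}_0$, and $\mathcal{X}_B$, with no reference to $f$, $\mathcal{N}$, or the disturbance bounds $\delta$ and $\epsilon$. This is exactly what allows the inclusion of trajectory sets to transfer the safety property from $\mathcal{A}$ to $\mathcal{F}$, and it is worth stating so that the reader does not mistake the safety notion for something tied to the particular vector field generating the trajectory.
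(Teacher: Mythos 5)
Your proposal is correct and follows essentially the same route as the paper: the paper's proof likewise reduces the corollary to the trajectory inclusion of Theorem~\ref{thm:na}, phrased contrapositively (an unsafe initialised trajectory of $\mathcal{F}$ would be an unsafe initialised trajectory of $\mathcal{A}$). Your explicit remark that ``initialised'' and ``safe'' depend only on $\xi$, $\mathcal{X}_0$, and $\mathcal{X}_B$ is a fair bit of added care but not a departure from the argument.
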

\begin{proof}[Proof of Corollary~\ref{cor:safe}]
By  Theorem~\ref{thm:na}, if there exists an initialised trajectory of $\mathcal{F}$ that is unsafe, 
then the same is an initialised trajectory of $\mathcal{A}$ that is unsafe. The statement follows by contraposition. 
\end{proof}

\begin{remark}[Existence of Neural Abstractions]
Let $\mathcal{F}$ be a dynamical model defined 
by function $f$ and disturbance radius $\delta \geq 0$, and let $\mathcal{X} \subseteq \mathbb{R}^n$ be a domain of interest.
A neural abstraction of $\mathcal{F}$ with arbitrary error 
bound $\epsilon > 0$ over $\mathcal{X}$ exists if a neural network that approximates $f$ with error bound $\epsilon - \delta$ (cf. condition~\eqref{eq:na-spec}) exists over the same domain.
In this work, we do not prescribe conditions on either width or depth of the network 
to ensure existence of a neural abstraction. 
Such conditions are given by universal approximation theorems 
for neural networks with ReLU activation functions, which have been developed in seminal work in machine learning~\cite{lu2017ExpressivePowerNeural,barron1994ApproximationEstimationBounds, cybenko1989ApproximationSuperpositionsSigmoidal, funahashi1989ApproximateRealizationContinuous, hornik1989MultilayerFeedforwardNetworks, leshno1993MultilayerFeedforwardNetworks}.
\end{remark}

Altogether, we define the neural abstraction of a non-linear dynamical system $\mathcal{F}$ as a neural ODE with an additive 
disturbance $\mathcal{A}$ that approximates  the dynamics while also accounting for the approximation error.
Notably, we place no assumptions on the vector field $f$. In particular, Theorem~\ref{thm:na} does not require $f$ to be Lipschitz continuous: the soundness of a neural abstraction relies on condition~\eqref{eq:na-spec}, whose certification we offload to an SMT solver (cf. Section~\ref{sec:errver}). 
The resulting neural abstraction is
to a hybrid automaton with affine dynamics and non-deterministic disturbance (cf. Section~\ref{sec:ver}),
which does not rely on
the Picard-Lindelof theorem to ensure
uniqueness or existence of a solutions. 

\section{Formal Synthesis of Neural Abstractions}\label{sec:synth}

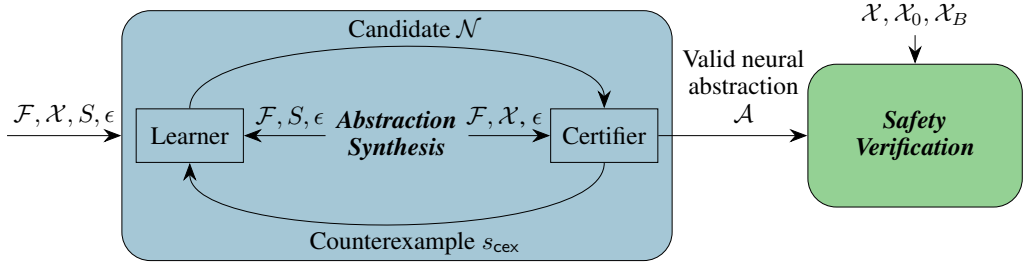
\begin{figure}
    \centering
    \begin{adjustbox}{center, scale=0.95}
    \usetikzlibrary{arrows.meta, calc}
\begin{tikzpicture}

        \def\mywidth{2.9}
        \def\mycentre{1.25}
        \node [draw, rectangle, rounded corners=0.4cm,  minimum width=7.7cm, minimum height=3.5cm, align=center, fill={cyan!40!gray}, fill opacity=0.6, text opacity=1] (synth) at (\mycentre, 2) {\emph{\textbf{Abstraction}} \\ \emph{\textbf{Synthesis}}};
	     \node [draw, rectangle, minimum width=1.5cm, minimum height=0.75cm] (learner) at (\mycentre-\mywidth, 2) {Learner};  
		\node [draw, rectangle, minimum width=1.5cm, minimum height=0.75cm] (verif) at (\mycentre+\mywidth, 2) {Certifier};
		\node  (out) at (6.75, 2) {};
		\node  (0) at (1.5, .5) {Counterexample $s_{\sf cex}$};
		\node  (2) at (1.5, 3.5) {Candidate $\mathcal{N}$};
		\node  (5) [align=center] at (6.1, 2.7) {Valid neural \\ abstraction \\ $\mathcal{A}$};
		\node  (7) at (\mycentre+1.5, 2.25) {$\mathcal{F}, \mathcal{X}$, $\epsilon$};
        \node  (8) at (\mycentre-1.5, 2.25) {$\mathcal{F}, S, \epsilon$};

        \node [draw, rectangle, minimum width=3cm, minimum height=2cm, align=center, rounded corners=0.4cm, fill={green!40!gray}, fill opacity=0.6, text opacity=1] (safety) at (8.5, 2) {\emph{\textbf{Safety}} \\ \emph{\textbf{Verification}}};

		\draw [-{Stealth[scale=1.5]}] [in=90, out=90, looseness=0.50] (learner) to (verif);
		\draw [-{Stealth[scale=1.5]}] [bend left=90, looseness=0.50] (verif) to (learner);
		\draw [-{Stealth[scale=1.5]}] [in=180, out=0] (verif) to (safety);
		\draw [-{Stealth[scale=1.5]}] [in=180, out=0] (\mycentre+1,2) to (verif);
        \draw [{Stealth[scale=1.5]}-] (synth.west) -- +(-1.6, 0)  node[midway, align=center,above] {$\mathcal{F}, \mathcal{X}, S, \epsilon$};
        \draw [{Stealth[scale=1.5]}-] (safety.north) -- +(0, 0.4) node[above] {$\mathcal{X}, \mathcal{X}_0, \mathcal{X}_B$} ;
        \draw [-{Stealth[scale=1.5]}] (\mycentre-1,2) to (learner);

\end{tikzpicture}
    \end{adjustbox}
    \caption{Architecture for the safety verification of nonlinear dynamical models using neural abstractions. The inputs to our architecture are a concrete model $\mathcal{F}$ and its domain of interest $\mathcal{X}$, a finite set of initial datapoints $S$, a desired approximation error $\epsilon$, and regions of initial $\mathcal{X}_0$ and bad states $\mathcal{X}_B$.}
    \label{fig:cegis}
\end{figure}

Our approach to abstraction synthesis follows 
two phases---a {\em learning} phase and a {\em certification} phase---that 
alternate each other in a 
CEGIS loop~\cite{solar-lezama2008ProgramSynthesisSketching,DBLP:conf/asplos/Solar-LezamaTBSS06,kapinski2014SimulationguidedLyapunovAnalysis, abate2018CounterexampleGuidedInductive, Abate_Ahmed_Edwards_Giacobbe_Peruffo_2021, ravanbakhsh2015CounterExampleGuidedSynthesis, ravanbakhsh2019LearningControlLyapunov} (cf. Figure~\ref{fig:cegis}, left).
Our learning phase trains the parameters of a neural network $\mathcal{N}$ 
to approximate the system dynamics over a finite set of samples $S \subset \mathcal{X}$ 
of the domain of interest. 
Learning uses gradient descent algorithms, 
which can possibly scale to large amounts of samples.
Then, our certification phase either confirms the validity of
condition~\eqref{eq:na-spec} 
or produces a counterexample which we use to sample additional states and repeat the loop.
Certification is based on SMT solving, which reasons symbolically over the 
continuous domain $\mathcal{X}$ and assures soundness.
As a consequence, when certification confirms condition~\eqref{eq:na-spec} 
formally valid, then as per Theorem~\ref{thm:na} our neural abstraction $\mathcal{A}$ 
is a sound overapproximation 
of the concrete model $\mathcal{F}$ and is thus passed to safety verification 
(cf. Figure~\ref{fig:cegis}, right). 

Neural networks have been used in the past as representations of 
formal certificates for the correctness of systems such as 
Lyapunov neural networks, 
neural barrier certificates, 
neural ranking functions and supermartingales~\cite{DBLP:journals/csysl/AbateAGP21,nips/ChangRG19,DBLP:conf/rss/DaiLYPT21,chen2021LearningLyapunovFunctions,Abate_Ahmed_Edwards_Giacobbe_Peruffo_2021,DBLP:conf/corl/SunJF20,DBLP:conf/tacas/PeruffoAA21,DBLP:conf/hybrid/ZhaoZC020,DBLP:journals/fac/ZhaoZCLW21,DBLP:conf/cav/AbateGR20,DBLP:conf/aaai/LechnerZCH22,fse22,zhou2022neural}.
In the present work, we use neural networks for the first time as abstractions, and we 
instantiate this idea in safety verification of nonlinear models.
We shall now present the components of our abstraction synthesis procedure:
learner (cf. Section~\ref{sec:learning}) and certifier (cf. Section~\ref{sec:errver}). 

\subsection{Learning Phase}\label{sec:learning}

As with many machine learning-based algorithm, learning neural abstractions hinges on the loss function used as part of the gradient descent scheme for optimising parameters. The task is that of a regression problem, so the choice of loss function to be minimised is simple, namely, 
\begin{equation}\label{eq:loss}
    \mathcal{L} = \sum_{s \in S}\|f(s) - \mathcal{N}(s)\|_2,
\end{equation} 
where $\|\cdot\|_2$ represents the $2-\texttt{norm}$ of its input, and $S \subset \mathcal{X}$ is a finite set of data points that are sampled from the domain of interest. In other words, the neural abstractions are synthesised using a scheme based on gradient descent to find the parameters that minimise the mean square error over $S$.

The main inputs to the learning procedure are the vector field $f$ of the 
concrete dynamical model, an initial set of points $S$ sampled uniformly 
from the domain of interest $\mathcal{X}$. Additional parameters include the hyper-parameters for the learning scheme such as the learning rate, and a stopping criterion for the learning procedure. For the latter, there are two possible options: a target error which all data points must satisfy, or a bound on the value of the loss function. 

If a target error smaller than $\epsilon - 
\delta$ is provided, this is when all points in the data set $S$ satisfy the specification~\eqref{eq:na-spec} and certification subsequently check that this generalises over the entire $\mathcal{X}$. If an alternative loss-based stopping criterion is provided, then an error bound on the approximation is estimated using the maximum approximation error over the data set $S$ for use in certification. This estimated bound is conservative, i.e., greater than the maximum, to allow for successful certification to be more likely.

After learning, the network $\mathcal{N}$ is translated to symbolic form and passed to the certification block, which checks condition~\eqref{eq:na-spec} as described in Section~\ref{sec:errver}. 
The certifier either determines condition~\eqref{eq:na-spec} valid, and thus the CEGIS 
loop terminates, or computes a counterexample that falsifies the condition.

The counterexample is returned to the learning procedure and augmented by sampling for additional points nearby in order to maximise the efficiency of learning and the overall synthesis.

\subsection{Certification Phase}\label{sec:errver}
The purpose of the certification is to check that at no point in the domain of interest $\mathcal{X}$ is the maximum error greater than the upper bound $\epsilon - \delta$, as per the specification in  condition~\eqref{eq:na-spec}. Therefore, the certifier is provided with the negation of the specification, namely 
\begin{equation}
    \label{eq:smt-spec}
    \exists x \colon \underbrace{x \in \mathcal{X} \land \|f(x) - \mathcal{N}(x)\| > \epsilon - \delta}_\phi.
\end{equation} 
The certifier seeks an assignment $s_{\sf cex}$ of the variable $x$ such that the quantifier-free formula $\phi$ is \emph{satisfiable}, namely that the specified bound is violated. If this search is successful, then the network $\mathcal{N}$ has not achieved the specified accuracy over $\mathcal{X}$, and is thus not a valid neural abstraction. 
The corresponding assignment $s_{\sf cex}$ forms the counterexample that is provided back to the learner (the machine learning procedure from Section~\ref{sec:learning}). Alternatively, if no assignment is found then specification~\eqref{eq:na-spec} is proven valid; network 
$\mathcal{N}$ and error bound $\epsilon$ are then passed to the safety verification procedure (cf. Section~\ref{sec:ver}).

Certification of the accuracy of the neural abstractions is performed by an SMT solver.
Several options exist for the selection of the SMT solver, with the requirement that the solver should reason over quantifier-free nonlinear real arithmetic formulae~\cite{gao2013DRealSMTSolver,DBLP:journals/jsat/FranzleHTRS07}. 
This is because the vector field $f$ may contain nonlinear terms.
In our experiments, we employ dReal \cite{gao2013DRealSMTSolver}, which supports 
polynomial and non-polynomial terms such as transcendental functions 
like trigonometric or exponential ones.

A successful verification process allows for the full abstraction to be constructed using the achieved error $\epsilon$ and neural network $\mathcal{N}$. CEGIS has been shown to perform well and terminate successfully across a wide variety of problems. We demonstrate the robustness of our procedure in Appendix~\ref{sec:app-b}.

\section{Safety Verification of Neural Abstractions}\label{sec:ver}

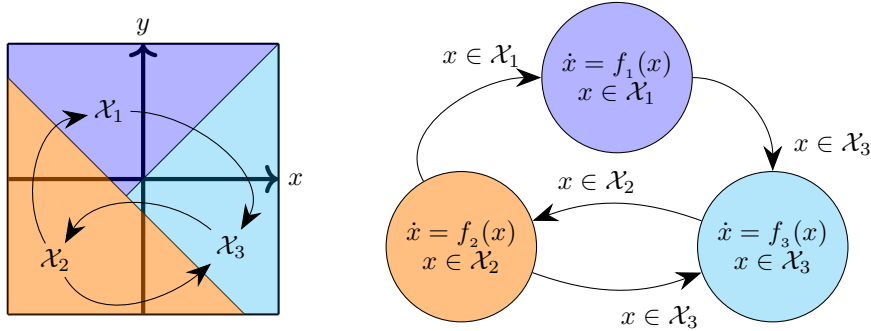
\begin{figure}
    \centering

    \usetikzlibrary{calc, arrows.meta}
\begin{tikzpicture}[scale=0.9,inner sep=2pt, minimum size=0.4cm,node
distance=.5cm]

\begin{scope}[xshift=2cm, yshift=-0.5cm]

% \node [draw, minimum size=4cm, line width=1.5] at (0,0) {};

\draw[->,ultra thick] (-2,0)--(2,0) node[right]{$x$};
\draw[->,ultra thick] (0,-2)--(0,2) node[above]{$y$};
\draw[thick] (-2,-2)--(-2,2) {};
\draw[thick] (-2,-2)--(2,-2) {};
\draw[thick] (2,-2)--(2,2) ;
\draw[thick] (-2,2)--(2,2);

\draw (-2, 1.5) to (1.5, -2) ;
\draw ($(-2, 1.5)!(2,2)!(1.5, -2)$) -- (2,2);

% \draw (-2, 0) to (2, 0);
\coordinate (intersection) at (-0.25, -0.25);
\fill[orange, opacity=0.5] (-2, -2) to (1.5, -2) to ( -2, 1.5);
\fill[blue, opacity=0.3] (-2, 1.5) to (-2, 2) to ( 2, 2) to (intersection);
\fill[cyan, opacity=0.3] (intersection) to (1.5, -2) to ( 2, -2) to (2,2);
\node (P1) at (-0.5, 1) {$\mathcal{X}_1$};
\node (P2) at (-1.3, -1.2) {$\mathcal{X}_2$};
\node (P3) at (1.3, -1) {$\mathcal{X}_3$};

\draw[-{Stealth[scale=2]}] (P1) to [out=0,in=60] (P3);
\draw[{Stealth[scale=2]}-] (P2) to [out=60,in=140] (P3);
\draw[{Stealth[scale=2]}-] (P3) to [out=-140,in=-60] (P2);
\draw[-{Stealth[scale=2]}] (P2) to [out=120,in=190] (P1);

\end{scope}

%%%% Hybrid automaton
\begin{scope}[yshift=1cm, xshift=9cm]

\begin{scope}[yshift=0cm]

    \node (L1)  [draw, circle, minimum size=2cm, align=center, fill=blue, fill opacity=0.3, text opacity=1] at (0,0)  {$\dot{x} =f_{{_1}}(x)$ \\ $x \in \mathcal{X}_{{1}} $ } ; 
\end{scope}

\begin{scope}[yshift=-2.5cm, xshift=2.3cm]
    \node (L3)  [draw, circle, minimum size=2cm, align=center, fill=cyan, fill opacity=0.3, text opacity=1] at (0,0)  {$\dot{x} =f_{{_3}}(x)$ \\ $x \in \mathcal{X}_{{3}} $ }; 
%   \draw[] (L2) circle (3cm);

\end{scope}

\begin{scope}[yshift=-2.5cm, xshift=-2.3cm]
    \node (L2) [draw, circle, minimum size=2cm, align=center, fill=orange, fill opacity=0.5, text opacity=1] at (0,0)  {$\dot{x} =f_{{_2}}(x)$ \\ $x \in \mathcal{X}_{{2}} $ }; 
%   \draw[] (L3) circle (3cm);

\end{scope}

%% Transitions

% \draw[-{Stealth[scale=2]}] (L1) to [out=60,in=230] (L2);
\draw[-{Stealth[scale=2]}] (L1) to [out=0,in=90] (L3);
\draw[-{Stealth[scale=2]}] (L2) to [out=120,in=180] (L1);
\draw[-{Stealth[scale=2]}] (L2) to [out=-20,in=-160] (L3);
\draw[-{Stealth[scale=2]}] (L3) to [out=160,in=20] (L2);
% \draw[-{Stealth[scale=2]}] (L3) to [out=-45,in=135] (L1);

\node [circle, left of=L1, yshift=0.2cm, xshift=-1.3cm, align=center] {\\$x \in \mathcal{X}_{{1}}$ \\ };

% \node [circle, above of=L1, yshift=0.7cm, xshift=-0.4cm, align=center] {\\$x \in P_{{1}}$ \\ };

\node [circle, above of=L3, yshift=0.7cm, xshift=0.8cm, align=center] {$x \in \mathcal{X}_{{3}}$ \\ }; 

% \node [circle, below of=L2, yshift=-1.2cm, xshift=-0.4cm, align=center] {$x \in P_{{2}}$ \\ };
%
\node [circle, right of=L2, yshift=0.7cm, xshift=1.3cm, align=center] {$x \in \mathcal{X}_{{2}}$ \\ };

\node [circle, below of=L3, yshift=-0.6cm, xshift=-1.5cm, align=center] {$x \in \mathcal{X}_{{3}}$ \\ };

\end{scope}

\end{tikzpicture}
    \caption{A hybrid automaton corresponding to a state-space partitioning. Each of the three discrete modes corresponds to a unique partition $\mathcal{X}_i$ and vector field $f_i(x)$. Discrete transitions are denoted by the edges of the directed graph with a transition between two modes if the corresponding partitions $\mathcal{X}_i$ and $\mathcal{X}_j$ are adjacent and a trajectory from $f_i$ `crosses' the corresponding partition.}

    \label{fig:automaton}
\end{figure}

Neural abstractions are dynamical models expressed in terms of neural ODEs with additive disturbances (cf. Equation~\ref{eq:na}). 
Corollary~\ref{cor:safe} ensures the fact for which concluding that a neural abstraction is safe suffices to assert that the concrete dynamical model is also safe. Consequently, once a neural ODE is formally proven to be an abstraction for the concrete dynamical model, which is entirely delegated to our synthesis procedure (cf. Section~\ref{sec:synth}), our definition of neural abstractions enables any procedure for the safety verification of neural ODEs with disturbances to be a valid safety verification procedure for the corresponding dynamical model.

Safety verification approaches for dynamical systems controlled by neural networks solve a similar problem~\cite{amcc/XiangTRJ18,tecs/HuangFLC019,hybrid/DuttaCS19,tecs/TranCLMJK19,cav/TranYLMNXBJ20,cav/IvanovCWAPL21,ijcai/BacciG021,schilling2022}, yet with a subtle difference: neural network controllers take control actions at discrete points in time. 
Instead, neural ODEs characterise dynamics over continuous time. 
Some procedures for the direct verification of neural ODEs have been 
introduced very recently, and this currently an area under active development~\cite{aaai/GruenbacherHLCS21,gruenbacher2022,DBLP:conf/formats/LopezMHJ22}.
Yet, existing approaches do not consider the case of a neural ODE with a 
non-deterministic drift. Therefore, in order to obtain a verification procedure for 
neural abstractions, we build upon the observation that a neural ODEs with 
ReLU activation functions and non-deterministic drift defines a 
hybrid automaton with affine dynamics.

Hybrid automata (cf. Figure~\ref{fig:automaton}) model the interaction between continuous dynamical systems and finite-state transition systems~\cite{DBLP:conf/lics/Henzinger96,van_der_Schaft_Schumacher_2000}. 
A hybrid automaton consists of a finite set of variables and a finite graph,
whose vertices we call discrete modes and edges we call discrete transitions. Every mode is associated with
an invariant condition and a flow condition over the variables, 
which determine the continuous dynamics of the systems on the specific mode. 
Every discrete transition is associated with a guard condition, 
which determines the effect on discrete transitions between modes. 
While we refer the reader to seminal work for a general definition of hybrid 
automata~\cite{DBLP:conf/lics/Henzinger96}, we present a translation from neural abstractions to hybrid automata.
\subsection{Translation of Neural Abstractions Into Hybrid Automata}
We begin with the observation that each neuron within a given hidden layer of a 
neural network with ReLU activation functions induces a hyperplane 
in the vector space associated with the previous layer
This hyperplane results in two half-spaces, one corresponding to the neuron being active  
and one to it being inactive. 
For the $j$th neuron in the $i$th layer, these two halfspaces are respectively 
the two parts of the hyperplane given by 
\begin{equation}
    \label{eq:hyperplane}
    \{ y_{i-1} \mid W_{i, j} y_{i-1} + b_{i, j} = 0 \},
\end{equation}
where $W_{i, j}$ is the $j$th row of the weight matrix $W_i$ 
and $b_{i, j}$ is the $j$th element of the bias $b_i$ (cf. Section~\ref{sec:na}). 
Therefore, every combinatorial configuration of the neural network defines 
an intersection of halfspaces that defines a polyhedral region 
in the vector space of the input neurons. Moreover, every configuration 
also defines a linear function from input to output neurons. 
The space of configurations thus defines a partitioning of the input space, 
where each region is associated with an affine function. 
A neural abstraction casts into a hybrid automaton, where every mode is 
determined by a configuration of the hidden neurons and each of 
these configurations induces a system of affine ODEs (cf. Figure~\ref{fig:automaton}).

\paragraph{Discrete Modes}
We represent a configuration of a neural network 
as a sequence $C = (c_1, \dots, c_k)$ of Boolean vectors 
$c_1 \in \{0,1\}^{h_1}, \dots, c_k \in \{0,1\}^{h_k}$, 
where $k$ denotes the number of hidden layers and 
$h_1, \dots, h_k$ denote the number neurons in each of them (cf. Section~\ref{sec:na}).
Every vector $c_i$ represents the configuration of the 
neurons at the $i$th hidden later, and the $j$th element of 
$c_i$ represent the activation status of the $j$th neuron 
at the $i$th later, which equals to 1 is the neuron is active 
and 0 if it is inactive. 
Every mode of the hybrid automaton corresponds to exactly 
one configuration of neurons. 

\paragraph{Invariant Conditions}
We define the invariant of each mode as a restriction of the domain of interest 
to a region 
$\mathcal{X}_C \subseteq \mathcal{X}$, which 
denotes the maximal set of states that enables configuration $C$.  
To construct $\mathcal{X}_C$, we define a higher-dimensional polyhedron on the space of 
valuation of the neurons that enable configuration $C$, i.e., 
\begin{equation}
\label{eq:yc}
    \mathcal{Y}_{C} = \left\{ (y_0, \dots, y_k) \Bigm|
    \begin{array}{c} \wedge_{i=1}^k y_i = \diag(c_i)(W_i y_{i-1} + b_i) \land \\
   \diag(2c_i-1)(W_i y_{i-1} + b_i) \geq 0 
   \end{array} \right\}.
\end{equation}
Note that $\diag(v)$ denotes the square diagonal matrix 
whose diagonal takes its coefficients from vector $v$;
in our case, this results in a square diagonal matrix whose coefficients are either 0 or 1.
Then, we project $\mathcal{Y}_{C}$ onto the input neurons $y_0$, 
denoted $\mathcal{Y}_{C}\upharpoonright_{y_0}$.
Since the input neurons $y_0$ are equivalent to the state variables of the dynamical model, 
the invariant condition of mode $C$ results in
\begin{equation}
\label{eq:xc}
    \mathcal{X}_{C} = (\mathcal{Y}_{C}\upharpoonright_{y_0}) \cap \mathcal{X}.
\end{equation}
A projection can be computed using the Fourier-Motzkin algorithm
or by projecting the vertices of the polyhedron in a double description method.
However, even though this is effective in our experiments, it has 
worst-case exponential time complexity.  
A polynomial time construction can be obtained by propagating halfspaces backwards along the network, 
similarly to methods used in abstraction-refinement~\cite{DBLP:conf/tacas/BogomolovFGH17,DBLP:conf/cav/FrehseGH18}. 
We outline the alternative construction in Appendix~\ref{sec:improved_translation}.

\paragraph{Flow Conditions}
The dynamics of each mode $C$ can be seen itself as a dynamical system with bounded disturbance: 
\begin{equation}
    \dot{x} = A_C x + b_C + d, \quad \|d\| \leq \epsilon, \quad x \label{eq:na-dynamics}
    \in \mathcal{X}_C.
\end{equation}
The matrix $A_C \in \Real^{n \times n}$ and 
the vector of drifts $b_C \in \Real^n$ determine the linear ODE of the mode, whereas $\epsilon > 0$ is the error bound derived from the neural abstraction. 

The coefficients of the system are given by the weights and biases of the neural 
network as follows:
\begin{align}
    A_C &= W_{k+1} \textstyle\prod_{i=1}^{k} \diag(c_i) W_i,\\
    b_C &= b_{k+1} + \textstyle\sum_{i=1}^k (W_{k+1}\prod_{j=i+1}^k \diag(c_j)W_j)\diag(c_i)b_i.
\end{align}

\paragraph{Discrete Transitions and Guard Conditions}
A discrete transition exists between any two given modes 
if the two polyhedra 
that define their invariant conditions share a facet and the dynamics pass through at some point along the facet.
This can be checked by considering the sign of the Lie derivative between the dynamics and the corresponding
facet, that is, the inner product between the dynamics and the 
normal vector to the facet. In practice, we take a faster but more conservative approach by considering
that a transition exists between two modes when the corresponding polyhedral regions share at least a vertex. 
The guard condition of a discrete transition is simply the invariant of the destination mode. 

\subsection{Enumeration of Feasible Modes} \label{sec:enumeration}
A given configuration $C$ exists in the hybrid automaton if and only if the corresponding set $\mathcal{X}_C$, which is a convex polyhedron in $\Real^n$, is nonempty; this consists of verifying that the linear program (LP) constructed from the polyhedron is feasible. Finding all modes of the hybrid automaton therefore consists of solving $2^H$ linear programs, where $H = h_1 + \dots + h_k$ is the total number of hidden neurons in the network. However, this exponential scaling with the number of neurons is limiting in terms of network size. Therefore, we propose an approach that works very well in practice to determine all valid neuron configurations.

The approach relies on the observation that within a bounded polyhedron $\mathcal{P}$, a given neuron has two modes ($\relu$ enabled or disabled) only if the induced hyperplane intersects $\mathcal{P}$. 
If it does not, only one of the two possible half-spaces contributes to any possible active configuration, and the other neuron mode can be disregarded. Therefore, this approach involves iterating through each neuron in turn and constructing two LPs---one for each halfspace intersected with the domain of interest $\mathcal{X}$. 
If only one LP is valid, we can fix the neuron to this mode, i.e., from this point onward only consider the intersection with the halfspace corresponding to the feasible LP, and  construct a new polyhedron from the intersection of $\mathcal{X}$ and the feasible half-space.

In short, we consider the neurons of the network as a binary tree, with the branches representing the enabled and disabled state of this neuron. We perform a depth-first tree search through this tree by intersecting with the corresponding half-spaces. Upon reaching an end node, we store this configuration (branches taken) and revert back to the most recent unexplored branch and continue. We include a more detailed description of this algorithm in Appendix~\ref{sec:improved_enumeration}. This approach is inspired by that presented in \cite{Bak_Tran_Hobbs_Johnson_2020}, which similarly enumerates through the path of neurons using sets to determine the output range of a network.

\section{Experimental Results} % Performance of Neural Abstractions}
\label{sec:results}

\subsection{Safety Verification Using Neural Abstractions}

We benchmark the results obtained by the safety verification algorithm proposed in Section~\ref{sec:ver} against Flow* \cite{cav/ChenAS13} (available under GPL), which is a mature tool for computing reachable regions of hybrid automata. It relies on computing flowpipes, i.e., sets of reachable states across time, which are propagated from a given set of initial states. The flowpipes are generated from Taylor series approximations of the model's vector field in \eqref{eq:dynamical-system}, over subsequent discrete time steps. Crucially, the use of a higher-order Taylor series, or of smaller time steps, leads to more precise computation of reachable sets. Since Flow*, like SpaceEx (available under GPLv3) is able to calculate over-approximations of flowpipes, it is suitable for use in safety verification, and is a state-of-the-art tool for verifying safety of nonlinear models.

Making a fair comparison around metrics for accuracy between Flow* and SpaceEx is challenging, as they represent flowpipes differently \cite{chen2015FlowMoreEffective, hybrid/BakBJ15}. We ask them to perform safety verification for a given pair of initial and bad states, on a collection of different nonlinear models. These models, and their parameters, are detailed in Appendix~\ref{sec:app-a}. As described in Section~\ref{sec:na}, the task of safety verification consists of ensuring that no trajectory starting within the set of initial states enters the set of bad states, within a given time horizon. 

Our setup is as follows. Firstly, for a given benchmark model we define a finite time horizon $T$, a region of initial states $\mathcal{X}_0$ and a region of bad states $\mathcal{X}_B$. Then, we run flowpipe computations with Flow* using high-order Taylor models. Similarly we run the procedure described in Section~\ref{sec:synth}, and construct a hybrid automaton as described in Section~\ref{sec:ver} to perform flowpipe computations using SpaceEx. We present the results in Table \ref{tab:results}. In the table, we show the Taylor model order (TM) and time step used within Flow*, as well as the structure of the neural networks used for neural abstractions. For example, we denote a network with two hidden layers with $h_1$ neurons in the first layer and $h_2$ neurons in the second hidden layer as [$h_1$, $h_2$]. We note that while Flow*, much like SpaceEx, can perform flowpipe computation on the constructed hybrid automaton, it is not specialised to linear models like SpaceEx is and in practice struggles with the number of modes. 

Notably, Flow* is unable to handle the two models that do not exhibit local Lipschitz continuity. Flow* constructs Taylor models that incorporate the derivatives of the dynamics: as expected,  unbounded derivatives will cause issues for this approach. Meanwhile, Ariadne \cite{balluchi2006AriadneFrameworkReachability} a is an alternative tool for over-approximating flowpipes of nonlinear systems. While Ariadne does not explicitly require Lipschitz continuity, it is also unable to perform analysis on tools with $nth$ root terms at zero, due to numerical instability. Instead, our abstraction method works directly on the dynamics themselves, rather than their derivatives, in order to construct simpler, abstract models that are amenable to be verified.  By formally quantifying how different an abstract model is through the approximation error, we are able to formally perform safety verification on such challenging concrete models. 

Notice that we additionally outperform Flow* on a Lipschitz-continuous model (\emph{Exponential} in  Table \ref{tab:results}), where the composition of  functions that make up the model's dynamics result in high errors in Flow* before the flowpipe can be calculated across the given time horizon. 
We highlight that despite relying on affine approximations (i.e., 1\textsuperscript{st} order models), neural abstractions are able to compete with, and even outperform, methods that use much higher order functions (10\textsuperscript{th} and 30\textsuperscript{th} in the benchmarks) for approximation.

\begin{table}
    
    \caption{Comparison of safety verification between Flow* and the combination of Neural Abstractions plus SpaceEx. Here, $T$: time horizon, \emph{TM}: Taylor model order, $\delta$: time-step, $t$: total computation time (better times denoted by \textbf{bold}), $W$: network neural structure, $M$: total number of modes in resulting hybrid automaton, \emph{Blw}: blowup in the error before $T$ is reached, and \emph{-}: no results unobtainable.}
    \label{tab:results}
    
    \centering
        \begin{tabular}{l l  l l l l  l l l l}
        \toprule
         \textbf{Model} & $T$ & \multicolumn{4}{c}{\textbf{Flow*}} & \multicolumn{4}{c}{\textbf{Neural Abstractions}} \\
                        &     & TM & $\delta$ & Safety Ver. & $t$ &       $W$ & $M$   &  Safety Ver. & $t$        \\
         \midrule
        Jet Engine      &   1.5 & 10 & 0.1   & Yes & \textbf{1.3} &       [10, 16] & 8    & Yes & 215           \\
        Steam Governor  &   2.0 & 10 & 0.1   & Yes & \textbf{62}  &       [12] & 29 & Yes & 219              \\
        Exponential     &   1.0 & 30 & 0.05  & Blw & 1034  &       [14, 14] & 12 & Yes & \textbf{308}       \\
        Water Tank      &   2.0 & -  & -     & No & -     &       [12]      & 6  & Yes & \textbf{49}          \\
        Non-Lipschitz 1 &   1.4 & -  & -     & No & -     &       [10]    & 12 & Yes & \textbf{19}           \\
        Non-Lipschitz 2 &   1.5 & -  & -     & No & -     &       [12, 10] & 32 & Yes & \textbf{59}          \\
        
         \bottomrule
        \end{tabular}
\end{table}

\subsection{Limitations}
\label{sec:limitations}
Our approach is limited in terms of scalability, both with regards to the dimension of the models and to the size of the utilised neural networks. The causes of this limitation are twofold: firstly we are bound by the computational complexity of SMT solving - known to be NP-hard - which can struggle with  complex formaulae with many variables. The certification step requires the largest amount of time (cf. Appendix~\ref{sec:app-b}), indicating that improvements in the verification of neural networks can lead to a large performance increase for our abstractions.

Secondly, we are limited in terms of the complexity of our abstractions by SpaceEx. While SpaceEx is a highly efficient implementation of LGG \cite{leguernic2009ReachabilityAnalysisHybrid}, the presence of a large number of discrete modes poses a significant computational challenge. It future work, we hope to investigate the balance between abstraction complexity and accuracy. The efficacy of neural abstraction 
on further tools for hybrid automata with affine dynamics also remains to be investigated~\cite{DBLP:conf/cpsweek/Althoff15,DBLP:conf/hybrid/BogomolovFFPS19,DBLP:conf/nfm/SchuppAMK17,balluchi2006AriadneFrameworkReachability}.

\section{Conclusion}

We have proposed a novel technique that leverages the approximation capabilities of neural networks with $\relu$ activation functions to synthesise formal abstractions of dynamical models. By combining machine learning and SMT solving algorthms in a CEGIS loop, our method computes abstract neural ODEs with non-determinism that overapproximate the concrete nonlinear models. This guarantees the property for which safety of the abstract model carries over to the concrete model. Our method casts these neural ODEs into hybrid automata with affine dynamics, which we have verified using SpaceEx. We have demonstrated that our method is not only comparable to Flow* in safety verification on existing nonlinear benchmarks, but also shows superior effectiveness on models that do not exhibit local Lipschitz continuity, which is a hard problem in formal verification. Yet, our experiments are limited to low-dimensional models and scalability remains an open challenge. Our approach has advanced the state of the art in terms of expressivity, which is the first step toward obtaining a general and efficient verifier based on neural abstraction. Obtaining scalability to higher dimensions will require a synergy of efficient SMT solvers for neural networks and safety verification of neural ODEs, which are both novel and actively researched questions in formal verification~\cite{aaai/GruenbacherHLCS21,gruenbacher2022,DBLP:conf/formats/LopezMHJ22,DBLP:conf/cav/KatzHIJLLSTWZDK19,DBLP:conf/cav/HuangKWW17,DBLP:journals/ftopt/LiuALSBK21,cav/TranYLMNXBJ20}. 

\section*{Acknowledgements}
We thank the anonymous reviewers for their helpful suggestions. 
Alec was supported by the EPSRC Centre for Doctoral Training in Autonomous Intelligent Machines and Systems (EP/S024050/1).

\bibliographystyle{splncs04}
\bibliography{biblio}

\vfill
\section*{Checklist}

\begin{enumerate}

\item For all authors...
\begin{enumerate}
  \item Do the main claims made in the abstract and introduction accurately reflect the paper's contributions and scope?
    \answerYes{}
  \item Did you describe the limitations of your work?
    \answerYes{Please see Section~\ref{sec:limitations}}
  \item Did you discuss any potential negative societal impacts of your work?
    \answerYes{See \ref{sec:intro}}
  \item Have you read the ethics review guidelines and ensured that your paper conforms to them?
    \answerYes{}
\end{enumerate}

\item If you are including theoretical results...
\begin{enumerate}
  \item Did you state the full set of assumptions of all theoretical results?
    \answerYes{See \ref{sec:na}}
        \item Did you include complete proofs of all theoretical results?
    \answerYes{See \ref{sec:na}}
\end{enumerate}

\item If you ran experiments...
\begin{enumerate}
  \item Did you include the code, data, and instructions needed to reproduce the main experimental results (either in the supplemental material or as a URL)?
    \answerYes{The code and data generation will be part of the supplementary material. Reproducing the results will be possible from this but is not the intention of the authors.}
  \item Did you specify all the training details (e.g., data splits, hyperparameters, how they were chosen)?
    \answerYes{The hyper-parameters for the learning procedure are chosen heuristically, but we include the relevant configuration files in the supplementary material.}
        \item Did you report error bars (e.g., with respect to the random seed after running experiments multiple times)?
    \answerNA{}
        \item Did you include the total amount of compute and the type of resources used (e.g., type of GPUs, internal cluster, or cloud provider)?
    \answerYes{See Table 1.}
\end{enumerate}

\item If you are using existing assets (e.g., code, data, models) or curating/releasing new assets...
\begin{enumerate}
  \item If your work uses existing assets, did you cite the creators?
    \answerYes{We have cited all used tools.}
  \item Did you mention the license of the assets?
    \answerYes{See Section~\ref{sec:results}}
  \item Did you include any new assets either in the supplemental material or as a URL?
    \answerYes{The code will be included in the supplementary material.}
  \item Did you discuss whether and how consent was obtained from people whose data you're using/curating?
    \answerNA{}
  \item Did you discuss whether the data you are using/curating contains personally identifiable information or offensive content?
    \answerNA{}
\end{enumerate}

\item If you used crowdsourcing or conducted research with human subjects...
\begin{enumerate}
  \item Did you include the full text of instructions given to participants and screenshots, if applicable?
    \answerNA{}
  \item Did you describe any potential participant risks, with links to Institutional Review Board (IRB) approvals, if applicable?
    \answerNA{}
  \item Did you include the estimated hourly wage paid to participants and the total amount spent on participant compensation?
    \answerNA{}
\end{enumerate}

\end{enumerate}

\pagebreak
\appendix

\section{Benchmark Nonlinear Dynamical Models}
\label{sec:app-a}
For each dynamical model, we report the vector field $f: \mathbb{R}^n \to \mathbb{R}^n$ and the spatial domain $\mathcal{X}$ over which the abstraction is performed and which, unless otherwise stated, is taken to be the hyper-rectangle $[-1,1]^n$. 

\bigskip 

\textbf{Water Tank}
\begin{equation}\label{eq:water}
\begin{cases}
  \dot{x} = 1.5 - \sqrt{x} \\
    \mathcal{X}_0 = [0, 0.01]  \\
    \mathcal{X}_B = \{x | x \geq 2 \}
\end{cases}
\end{equation}

\textbf{Jet Engine} \cite{Aylward2008StabilityAR}

\begin{equation}\label{eq:jet}
\begin{cases}
  \dot{x} = -y - 1.5 x ^ 2 - 0.5 x ^ 3 - 0.1, \\
    \dot{y} =   3 x - y, \\
    \mathcal{X}_0 = [0.45, 0.50] \times [-0.60, -0.55] \\
    \mathcal{X}_B = [0.3, 0.35] \times [0.5, 0.6]
\end{cases}
\end{equation}

\textbf{Steam Governor}
\cite{sotomayor2006BifurcationAnalysisWatt}
\begin{equation}\label{eq:steam}
   \begin{cases}
    \dot{x} = y, \\
    \dot{y} =  z^2 \sin(x)\cos(x) - \sin(x) - 3y, \\
    \dot{z} = -(\cos(x) - 1), \\
    \mathcal{X}_0 = [0.70, 0.75] \times [-0.05, 0.05] \times [0.70, 0.75] \\
    \mathcal{X}_B = [0.5, 0.6] \times [-0.4, -0.3] \times [0.7, 0.8] 
   \end{cases} 
\end{equation}

\textbf{Exponential}
\begin{equation}\label{eq:exp}
   \begin{cases}
    \dot{x} = -\sin(\exp(y^3+1)) - y^2 \\
    \dot{y} =  -x, \\
    \mathcal{X}_0 = [0.45, 0.5] \times [0.86, 0.91] \\
    \mathcal{X}_B = [0.3, 0.4] \times [0.5, 0.6] 
   \end{cases} 
\end{equation}

\textbf{Non-Lipschitz Vector Field 1 (NL1)}
\begin{equation}\label{eq:nl1}
\begin{cases}
 \dot{x} = y  \\
 \dot{y} = \sqrt{x} \\
 \mathcal{X} = [0, 1] \times [-1, 1],\\
 \mathcal{X}_0 = [0, 0.05] \times [0, 0.1]\\
 \mathcal{X}_B = [0.35, 0.45] \times [0.1, 0.2] 
\end{cases}
\end{equation}

\textbf{Non-Lipschitz Vector Field 2 (NL2)}
\begin{equation}\label{eq:nl2}
\begin{cases}
 \dot{x} = x^2 + y  \\
 \dot{y} = \sqrt[3]{x^2} -x, \\
 \mathcal{X}_0 =  [-0.025, 0.025] \times [-0.9, -0.85] \\
 \mathcal{X}_B = [-0.05, 0.05] \times [-0.8, -0.7]
\end{cases}
\end{equation}

%%%%%%%%%%%%%%%%%%%%%%%%%%%%%%%%%%%%%%%%%%%%%%%%%%%%%%%%%%

\section{Additional Experimental Results and Figures}
\label{sec:app-b}

\subsection{Experimental Comparison Against Affine Simplical Meshes}

In this section, we present some supplementary empirical results on neural abstractions. Firstly, we note that hybridisation-based abstraction of nonlinear models have been studied previously, such as in 
\cite{acta/AsarinDG07},
which describes a type of hybridisation-based abstractions that is similar to those constructed in this work. The approach relies first on partitioning the state space using a simplicial mesh grid, and then allowing the dynamics in each mesh to be calculated from an affine interpolation between the vertices of the simplex. This affine simplicial mesh (ASM) based approach constructs abstractions of the same expressivity as neural abstractions (first order approximations) with partitions defined by affine inequalities. An approximation-error bound for ASM can be calculated for systems which have bounded second order derivatives using the model dynamics and the size of each simplex (all simplices are assumed to be the same size), as described in 
\cite{acta/AsarinDG07}.
In Table \ref{tab:asm-comp} we compare between abstractions constructed using an affine simplicial mesh and neural abstractions. We run our procedure to synthesise certified abstractions using selected network structures and an initial target error of 0.5. If a successful abstraction is synthesised, we reduce the error by some multiplicative factor and repeat. This iterative procedure continues until no success is reached within a time of 300s. We report the results from 10 repeated experiments over different initial random seeds for neural abstractions, reporting the average (mean), minimum and maximum results obtained. In contrast, we report the approximation-error bound for ASM for different numbers of partitions.

The results reported in Table \ref{tab:asm-comp} illustrate that neural abstractions outperform ASM based abstractions in terms of error for similar numbers of partitions. Furthermore, neural abstractions generally require significantly fewer partitions for significantly lower approximation-error bounds. In practice this means neural abstractions will outperform ASM-based abstractions for safety verification both in terms of speed and accuracy. We also note the success ratio of our experiments, i.e., the ratio of all experiments which achieve an approximation-error bound of 0.5 or less. These results suggest that in general or procedure is robust and terminates successfully with high probability for reasonable target errors. 

We note that since ASM based abstractions are constructive and are able to deterministically increase the number partitions and consequently reduce the error, for very large numbers of partitions they would achieve lower errors than neural abstractions. However, in practice these abstractions would be too large in complexity to use with SpaceEx for safety verification.

\begin{table}
    \centering
    \caption{A comparison between abstractions constructed using an affine simplicial mesh and neural abstractions.  Here, $W$ represents the neural structure used for neural abstraction, $N_P$: total number of partitions, $\epsilon$: the calculated upper bound on the approximation error, $\bar{N_P}$: average (mean) number of partitions, $\bar{\epsilon}$: average (mean) approximation error bound, $\epsilon^{+}$ : the maximum approximation error, $\epsilon^{-}$: the minimum approximation error, Success Ratio: the ratio of repeated experiments that terminated successfully (i.e., an error of 0.5 was reached within the first timeout of 300s). Note, we only include successful experiments when calculating the average, min and max (since no error exists for unsuccessful experiments). All reported errors use the \texttt{2-norm}.} 
    \label{tab:asm-comp}
    \begin{adjustbox}{center}
        \begin{tabular}{c  c c  c c c c c c}
    \toprule
    Benchmark & \multicolumn{2}{c}{Affine Simplicial Mesh} & \multicolumn{6}{c}{Neural Abstractions} \\
              & $N_p$       & $\epsilon$          &     $W$ &   $\bar{N_P}$ & $\bar{\epsilon}$ & $\epsilon^{+}$ &  $\epsilon^{-}$   &     Success Ratio \\
    \midrule
    Jet Engine & 8   &  1.33   &  [10]       &      9 &   0.11     &   0.22    &           0.040   &            1.0 \\
          &      32  &  0.33   &  [10, 10]   &     27 &   0.077    &   0.17    &           0.040   &            1.0 \\
          &      128 &  0.083  & [15, 15]    &     61 &   0.058    &   0.071   &           0.053   &            1.0 \\
          \midrule
    Steam &      24  &  3.58   &  [10]       &     27 &   0.27     &   0.37    &           0.21    &            1.0 \\
          &      192 &  0.89   &  [20]       &    236 &   0.18     &   0.27    &           0.15    &            1.0 \\
          \midrule
    Exponential & 8  &  13.7   &  [10]       &      9 &   0.29     &   0.40    &           0.22    &            0.5 \\
          &      32  &  3.44   &  [20]       &     30 &   0.19     &   0.22    &           0.13    &            0.9 \\
          &      128 &  0.86   &  [20, 20]   &     75 &   0.15     &   0.22    &           0.071   &            1.0 \\
    \bottomrule
    \end{tabular}

    \end{adjustbox}
\end{table}

\subsection{Computation Run-time Profiling}

In Table~\ref{tab:timings} we show a breakdown of the runtimes of our procedure shown in the main text. In particular, we present the total time spent during learning, certification of the abstraction and finally in safety verification.

\begin{table}[ht]
    \centering
     \caption{Breakdown of the timings shown in Table \ref{tab:results}. Shown are the timings in the constituent component shown in Figure \ref{fig:cegis}: time spent during learning, time spent during certification of the neural abstraction, and time spent during safety verification. Remaining time is spent in overheads, such as converting from neural network to hybrid automaton.}
    \label{tab:timings}
    \begin{tabular}{c c c c}
         \toprule
         Model & Learner & Certifier & Safety Verification \\
         \midrule
         Jet Engine      & 19 & 194 & 1.8 \\
         Steam Governor  & 42 & 177 & 0.5 \\
         Exponential     & 27 & 278 & 3.3 \\
         Water-tank      & 48  &  0.001  &  0.05  \\
         Non-Lipschitz 1 & 13 & 0.50 & 5.5 \\
         Non-Lipschitz 2 & 31 & 15 & 5.1 \\
         \bottomrule
    \end{tabular}
\end{table}

%%%%%%%%%%%%%%%%%%%%%%%%%%%%%%%%%%%%%%%%%%%%%%%%%%%%%%%%%%
\section{Improved Translation from Neural Abstractions to Hybrid Automata}

\subsection{Computing Invariant Conditions}\label{sec:improved_translation}
Invariant conditions are computed from the configuration of a neural network 
denoted as the sequence $C = (c_{1}, \dots, c_{k})$ of Boolean vectors 
$c_{1} \in \{0,1\}^{h_1}, \dots, c_{k} \in \{0,1\}^{h_k}$, 
where $k$ denotes the number of hidden layers and 
$h_1, \dots, h_k$ denote the number neurons in each of them (cf. Section~\ref{sec:na}).
Every vector $c_i$ represents the configuration of the 
neurons at the $i$th hidden later, and its $j$th element $c_{i,j}$ represents 
the activation status of the $j$th neuron at the $i$th layer.
Every mode of the hybrid automaton corresponds to exactly 
one configuration of neurons. In turn, every configuration of neurons $C$ 
restricts the neural network $\mathcal{N}$ into a linear function. More precisely, 
we inductively define the linear restriction at the $i$th hidden layer as follows:
\begin{equation}
    \mathcal{N}_C^{(i)}(x) = \diag(c_i) (W_i \mathcal{N}_C^{(i-1)}(x) + b_i),\text{ for } i=1,\dots,k, \qquad \mathcal{N}_C^{(0)}(x) = x.
\end{equation}

We define the invariant of each mode as a restriction of the domain of interest 
to a region 
$\mathcal{X}_C \subseteq \mathcal{X}$, which 
denotes the maximal set of states that enables configuration $C$.  
To construct $\mathcal{X}_C$, we begin with the observation that the activation configuration $c_i$ at every $i$th hidden layer 
induces a halfspace on the vector space of the previous layer of the neural network. 
Then, the pre-image of this halfspace backward along the previous layers of the linear restriction of the network 
characterises a corresponding halfspace on its input neurons. Since the input neurons  
are equivalent to the state variables of the dynamical model, the halfspace induced by layer $i$ projected onto state variables $x$ is
\begin{align}
    \mathcal{H}_C^{(i)} &= \text{pre-image of } \underbrace{\{ y_{i-1} \mid \diag(2c_{i}-1)(W_i y_{i-1} + b_i) \geq 0 \}}_{\text{halfspace induced by $i$th layer onto $(i-1)$th layer}} \text{ under } \mathcal{N}_C^{(i-1)}
\end{align}
The pre-image of a set $\mathcal{Y}$ under a function $g$ is defined as $\{ x \mid g(x) \in \mathcal{Y} \}$ 
and can be generally computed by quantifier elimination or, in the linear case, double description methods. 
However, these methods have worst-case exponential time complexity. 
To obtain $\mathcal{X}_C$ efficiently, we can leverage the fact that  
the pre-image of any halfspace  $\{ y \mid c^\T y \leq d \}$ under any affine function 
$g(x) = Ax + b$ equals to the set $\{ x \mid c^\T y \leq d \land y = Ax + b\}$,
which in turn defines the halfspace $\{ x \mid c^\T Ax \leq d - c^\T b \}$. 
Therefore, since $\mathcal{N}_C^{(i-1)}$ is an affine function, 
every halfspace can be projected backward through the affine functions $\mathcal{N}_C^{(i-1)}, \dots, \mathcal{N}_C^{(1)}$ 
using $\mathcal{O}(k)$ linear algebra operations.
Finally, the entire invariant condition for configuration $C$ is defined as the following polyhedron:
\begin{equation}
\mathcal{X}_C = \cap\{ \mathcal{H}_C^{(i)} \mid i = 1, \dots, k \} \cap \mathcal{X}.
\end{equation}
An invariant condition thus results in a polyhedron defined as the intersection of $k$ halfspaces 
together with the constrains that define the domain of interest. Notably, under the definition in this appendix,
the dynamics of mode $C$ given in Equation~\ref{eq:na-dynamics} correspond to the affine dynamical model 
\begin{equation}
    \dot{x} = \mathcal{N}_C^{(k+1)}(x) + d, \quad \| d\| \leq \epsilon, \quad x \in \mathcal{X}_C,
\end{equation}
whose dynamics are governed by the affine function
\begin{equation}
    \mathcal{N}_C^{(k+1)}(x) = W_{k+1}\mathcal{N}_C^{(k)}(x) + b_{k+1}.
\end{equation}

\subsection{Enumerating Feasible Modes}\label{sec:improved_enumeration}
Determining whether a mode $C$ exists in the hybrid automaton amounts to determining the linear program (LP) associated 
to polyhedron $\mathcal{X}_C$ is feasible. Finding all modes 
therefore consists of solving $2^H$ linear programs, where $H = h_1 + \dots + h_k$ is the total number of neurons.
This scales exponentially in the number of neurons. 
Here, we elaborate on the tree search algorithm described in Section~\ref{sec:enumeration} using a diagram; the purpose of this algorithm is to efficiently determine all active neuron configurations within a bounded domain of interest $\mathcal{X}$. 

We consider an example tree in Figure \ref{fig:tree}, which depicts an example search for a neural network with a single hidden layer consisting of three neurons. The tree illustrates the construction of $\mathcal{X}_C$ through repeated intersections of half-spaces as paths are taken through the tree structure. Nodes represent each neuron, labelled $N_i$, $i=1,2,3$ and each edge represents one of two possible half-spaces for the neuron it leaves from ($\relu$ enabled, solid line, and disabled, dashed line).  This approach allows us to prune neurons and overall solve significantly fewer linear programs than simply enumerating through all possible configurations.

\begin{figure}
    \centering
    \usetikzlibrary{positioning, calc, shapes.misc, arrows.meta}
\usetikzlibrary{decorations.pathreplacing}
\begin{tikzpicture}[scale=0.9,inner sep=2pt, minimum size=0.4cm,node
distance=.5cm]
\def\nodesep{2.5cm}
\node[draw, circle] (N1) at (0, 0) {$N_1$};
\node[draw, circle, xshift=\nodesep, yshift=-\nodesep] (N2) at (N1) {$N_2$};
\node[xshift=-\nodesep, yshift=-\nodesep] (N2b) at (N1) {$\mathcal{X} = \emptyset$};
\node[draw, circle, xshift=\nodesep, yshift=-\nodesep] (N3a) at (N2) {$N_3$};
\node[draw, circle, xshift=-\nodesep, yshift=-\nodesep] (N3b) at (N2) {N3};
\node[draw, circle, xshift=\nodesep, yshift=-\nodesep] (end3) at (N3a) {End};
\node[draw, circle, xshift=-\nodesep, yshift=-\nodesep] (end1) at (N3b) {End};
\node[draw, circle, xshift=\nodesep/2, yshift=-\nodesep] (end2) at (N3b) {End};
\node[xshift=-\nodesep/2, yshift=-\nodesep] (end3b) at (N3a) {$\mathcal{X} = \emptyset$};
\node[below] at (end2.south) {$C=(1,0,1$)};
\node[below] at (end3.south) {$C=(1,1,1$)};
\node[below] at (end1.south) {$C=(1,0,0$)};

\draw[-{Stealth[scale=2]}] (N1) -- (N2) node [midway, below, right, align=center] (lab1) {$\mathcal{X} \gets \mathcal{X} \cap h_1^+ $, \\ $\mathcal{X} \neq \emptyset$} ;

\draw[dashed] (N1) -- (N2b) node [midway, below, left, align=center, strike out] (lab1) {$\mathcal{X} \gets \mathcal{X} \cap h_1^- $};

\draw[-{Stealth[scale=2]}] (N2) -- (N3a) node [midway, below, right, align=center] (lab1) {$\mathcal{X} \gets \mathcal{X} \cap h_2^+ $, \\ $\mathcal{X} \neq \emptyset$};

\draw[-{Stealth[scale=2]}] (N3a) -- (end3) node [midway, below, right, align=center] (lab1) {$\mathcal{X} \gets \mathcal{X} \cap h_3^+$, \\ $\mathcal{X} \neq \emptyset$};

% \draw[-{Stealth[scale=2]}, dashed] (end3) to [out=180,in=-90]  (N3a);
\draw[dashed] (N3a) -- (end3b)  node [midway, left, align=center] (lab1) {$\mathcal{X} \gets \mathcal{X} \cap h_3^-$}; 

% \draw[-{Stealth[scale=2]}] (N3a) -- (end2);
% \draw[-{Stealth[scale=2]}, dashed] (end3) to [out=160,in=-65]  (N2);
\draw[-{Stealth[scale=2]}, dashed] (N2) -- (N3b) node [midway, below, left, align=center] (lab1) {$\mathcal{X} \gets \mathcal{X} \cap h_2^- $, \\ $\mathcal{X} \neq \emptyset$};
\draw[-{Stealth[scale=2]}] (N3b) -- (end2) node [midway, right, yshift=0.5cm, align=center] (lab1) {$\mathcal{X} \gets \mathcal{X} \cap h_3^+$}; ;
% \draw[-{Stealth[scale=2]}, dashed] (end2) to [out=180,in=-90]  (N3b);
\draw[-{Stealth[scale=2]}, dashed] (N3b) -- (end1) node [midway, left, align=center] (lab1) {$\mathcal{X} \gets \mathcal{X} \cap h_3^-$};;

% \node[yshift=-1.5cm] (FInal Path) at (end2.south) {Final Path: $N1 \to N2 \to N3 \to End \to N2 \to N3 \to End \to N3 \to End$};
\end{tikzpicture}
    \caption{Example Tree search to determine the active configurations for a neural network consisting of a single hidden layer with 3 neurons. Here, $h_i^+$ denotes the positive half-space ($\{x : w_i x + b_i \geq 0\}$) and $h_i^-$ denotes the negative half-space ($\{x : w_i x + b_i \leq 0\}$) of the $i\textsuperscript{th}$ neuron; $w_i$ represents the $i\textsuperscript{th}$ row of the weight matrix corresponding to the hidden layer, and $b_i$ represents the $i\textsuperscript{th}$ element of the bias vector of the hidden layer. Notably, when the set $\mathcal{X}$ becomes empty, it is no longer necessary to continue along that path. Once we reach the end of the tree, we have an active configuration $C$, and backtrack to the last node that was not fully explored.}
    \label{fig:tree}
\end{figure}
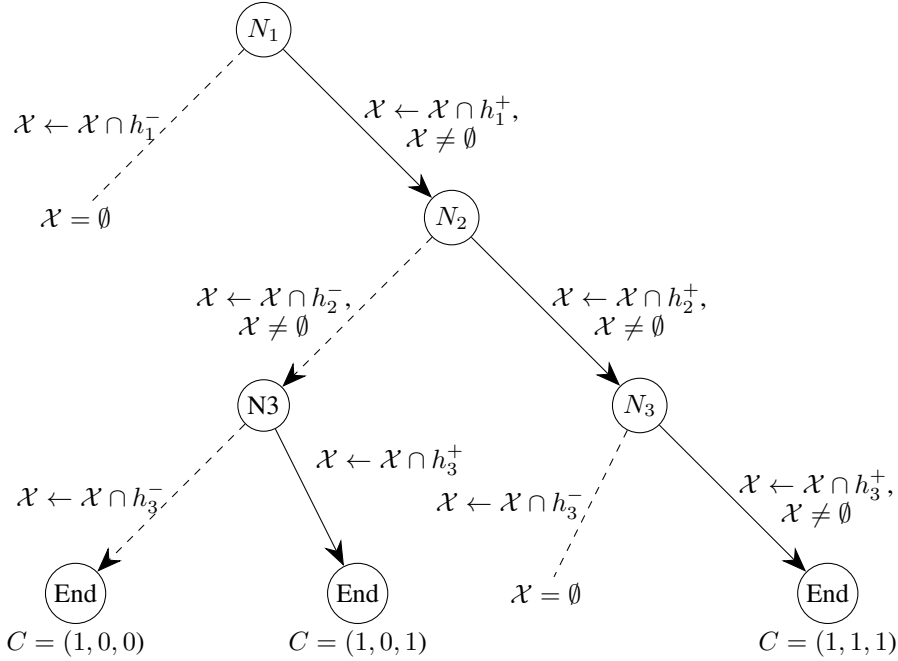

%%%%%%%%%%%%%%%%%%%%%%%%%%%%%%%%%%%%%%%%%%%%%%%%%%%%%%%%%%%%

\end{document}